\newcommand{\Porgy}{\textsc{Porgy}\xspace}
\newcommand{\Tulip}{\textsc{Tulip}\xspace}
\newcommand{\Node}{{\tt node}\xspace}
\newcommand{\Edge}{{\tt edge}\xspace}
\newcommand{\Port}{{\tt port}\xspace}
\newcommand{\iid}{{\tt id}\xspace}
\newcommand{\iter}{{\tt iter}\xspace}
\newcommand{\visit}{{\tt visit}\xspace}
\newcommand{\NotNode}{{\tt NotNode()}\xspace}
\newcommand{\tmatch}[1]{{\tt match(}#1{\tt )}}
\newcommand{\one}[1]{{\tt one(}#1{\tt )}}
\newcommand{\while}[1]{{\tt while(}#1{\tt )}}
\newcommand{\doo}[1]{{\tt do(}#1{\tt )}}
\newcommand{\whiledoo}[2]{{\tt while(}#1{\tt )do(}#2{\tt )}}
\newcommand{\repeatt}[1]{{\tt repeat(}#1{\tt )}}
\newcommand{\Ra}{\Rightarrow}
\newcommand{\lra}{\longrightarrow}
\newcommand{\F}{{\cal F}}
\newcommand{\LL}{{\cal L}}
\newcommand{\saturated}{\textit{saturated}}
\newtheorem{thm}{Theorem}
\newtheorem{lemma}[thm]{Lemma}
\newtheorem{definition}[thm]{Definition}
\title{Finding the Transitive Closure of Functional Dependencies using Strategic Port Graph Rewriting}
\author{
    J{\'a}nos Varga
    \institute{Department of Informatics}
    \institute{King's College London}
    \email{janos.varga@kcl.ac.uk}
}
\theoremstyle{definition}
\newtheorem{property}[thm]{Property}
\begin{document}

\maketitle

\begin{abstract}
We present a new approach to the logical design of relational databases, based on strategic port graph rewriting. 
We show how to model relational schemata as attributed port graphs and provide port graph rewriting rules to perform computations on functional dependencies.
Using these rules we present a strategic graph program to find the transitive closure of a set of functional dependencies. This program is sound, complete and terminating, assuming that there are no cyclical dependencies in the schema.
\end{abstract}

\section{Introduction}
Traditionally, steps of relational database design include Conceptual, Logical and Physical modelling. 
The theory behind these steps 
is well-understood and is part of the syllabus of many databases courses. Yet, database professionals often consider Logical Design (normalisation) too cumbersome and do not apply normalisation theory, despite the clear advantages of normalised database designs. Badia and Lemire~\cite{CallToArms} also highlight that conceptual and logical models do not always carry enough information about database semantics thereby leading the architect to a sub-optimal design. 

We use \emph{attributed port graphs} to represent a relational schema and its semantics. Port graphs are graphs where edges are connected to nodes at specific points, called ports. 
In port graphs, nodes, edges and ports can have attributes, which are used to represent properties of the system modelled. In this paper we focus on using port graphs in the logical phase of database design. We show that port graphs are a good choice of data structure to store relational metadata and can be transformed without the loss of metadata. 

To specify the transformations applied to relational schemata, we use \textit{port graph rewriting systems}, a general class of graph rewriting systems~\cite{Courcelle90}. The implementation framework we use is \Porgy~\cite{FernandezKP18} -- a visual, interactive tool for the specification, simulation and analysis of systems based on port graph rewriting. \Porgy provides a graphical interface, where users can define a system and specify its dynamics by means of port graph rewrite rules and strategies. Port graphs have node, port and edge \emph{attributes}, whose values are taken into account in port graph morphisms (used to define rewriting steps) and in strategy expressions (to control the application of rules).
\textit{Strategic graph programs}~\cite{FernandezKP18}, consisting of an initial port graph and a set of rewrite rules controlled by a strategy, are the essence of \Porgy. The strategy language offers separate primitives to select subgraphs of the model as focusing positions for rewriting and to select the rewrite rules to be applied, following the separation of concerns principle which makes programs easier to maintain and adapt. The strategy language also allows users to define strategies using not only operators to combine graph rewriting rules but also operators to deal with graph traversal and management of rewriting positions in a graph. \Porgy provides a visual representation of the set of rewrite derivations (a \emph{derivation tree}) and includes features such as cycle detection, to facilitate debugging.

We extend the rule language by adding the possibility to specify application conditions for a rule. That is, as part of a rewrite step, in the rule editor, users can define a set of conditions which are evaluated after a morphism has been found. The rewrite step is applied only if the rule condition evaluates to true. As a use-case we provide a set of port graph rewriting rules and a strategy to calculate the transitive closure of a set of functional dependencies. Although there are a number of tools already available to do the same, a distinctive advantage of our implementation is that it is visual and backtrackable (thanks to the derivation tree feature of \Porgy). Our strategy is sound, complete and terminating, given the restriction that there are no cyclical dependencies in the schema.

Summarising, our contributions are:
\begin{enumerate}
    \item a new visual language, based on port graphs, for logical design of relational schemata,
    \item generic application conditions for rules (a port graph rewriting language extension),
    \item a strategic graph program to find the transitive closure of a set of functional dependencies.
\end{enumerate}
This last contribution is a key step towards building strategic graph programs to find minimal covers, candidate keys and Third Normal Form (3NF) relation schemata.

\emph{Related Work.}
Graph theory and graph rewriting is by no means a new addition to the set of tools that have been used for relational database design. In~\cite{EmbleyM11} hypergraphs are used and their well-formed property (called a canonical hypergraph) determines the quality of design they represent. The authors of \cite{SaiedianS96} used directed graphs to find all candidate keys of a relation in polynomial time. A special family of labelled graphs, FD-graphs, were introduced in \cite{AusielloDS83} to obtain meaningful closures of a set of functional dependencies. In terms of graph transformations and rewriting we highlight two works. Hypergraph rewriting was used in \cite{BatiniD78} for the manipulation of functional dependencies and Triple Graph Grammars were used in \cite{JahnkeZundorf99} to optimize an already existing schema. Section~\ref{sec:portgraphdb} translates the work presented in~\cite{AusielloDS83} to port graphs and extends it.

\emph{Organization.}
This paper is organised as follows. 
We briefly review relational database theory and port graph rewriting background in Section~\ref{sec:background}. 
We present our port graph visual language for logical design of relation schemata in Section~\ref{sec:portgraphdb}.
In Section~\ref{sec:rulecond} we define the syntax of the language for generic rule application conditions.
Section~\ref{sec:tcstrat} illustrates how the visual language and the generic rule application conditions can be used to find the transitive closure of a set of FDs.
We then conclude in Section~\ref{sec:conclusion} by highlighting how these results can be used in future work.

\section{Background}
\label{sec:background}
In this section we will review the definitions and background in relational databases and port graph rewriting that we are going to use throughout this paper. 
Due to space constraints, for formal definitions and proofs, this section will refer the reader to the relevant works rather than recalling them. 
We will also briefly review related work that used graphs to represent or transform relational schemata.

\subsection{Relational Database Design}
We assume that the reader is familiar with the theory of logical design of relational databases~\cite{Codd70,Codd71a}. 
In particular, the definitions of: \textit{relation schema}, \textit{attribute}, \textit{candidate key} and \textit{functional dependency} (FD)~\cite{Codd70,Codd71a}. 
We refer to a single attribute with letters from the beginning of the alphabet $A, B, \ldots$ and to attribute sets with letters from the end of the alphabet $X, Y, Z$. 
This paper will denote the set of all FDs of a relation schema $\Sigma_R$ or just $\Sigma$, where appropriate.  
We also assume familiarity with the inference rules of functional dependencies, also known as Armstrong's Axioms~\cite{BeeriFH77}: Transitivity, Trivial Dependency, Augmentation, Union, Decomposition, Pseudotransitivity. 
We call the set of all FDs that can be inferred from $\Sigma$, using Armstrong's Axioms, the \textit{syntactic closure} $\Sigma^+$. 
It was shown that Armstrong's Axioms are sound and complete, which means that they find only and all (respectively) semantically correct dependencies. 
This work assumes that a) FDs are in canonical form ie. only single attributes appear on the right-hand sides of the FDs and b) there are no cyclical dependencies.


\subsection{Port Graph Rewriting}
\label{subsec:pg}
An attributed port graph is a labelled attributed graph where nodes have specific points-of-connection called ports, and edges that are attached to ports. In this subsection we recall the most important port graph rewriting constructs from Sections 2 and 3 of~\cite{FernandezKP18}, where the full formal definitions can also be found.

A \emph{port graph rewrite rule} is a port graph $L \Ra_C R$ consisting of two subgraphs $L$ and $R$ together with an \emph{arrow} node that links them. 
Each rule is characterised by its arrow node, which has a unique name (the rule's label), a condition \emph{Where} restricting the rule's matching, 
and ports to control the rewiring operations when rewriting steps are computed. 
Edges that run between ports of $L$, $R$ and the arrow node are coloured red by \Porgy to distinguish them from normal edges. 
We recall the definition of the \emph{matching morphism} that states that a \emph{match} $g(L)$ of the left-hand side is found in $G$ 
if there is a total port graph morphism $g$ from $L$ to $G$ such that if the arrow node has an attribute \emph{Where} with value $C$, then $g(C)$ is true in $G$. 
$C$ is of the form $\saturated(p_1) \wedge ... \wedge \saturated(p_n) \wedge B$. 
The predicate $\saturated(g(p_i))$ is true if there are no edges between $g(p_i)$ and ports outside $g(L)$ in $G$ -- this ensures that no edges will be left dangling in rewriting steps.
$B$ is a Boolean expression such that all its variables occur in $L$.
To aid visual design of rewrite rules, \Porgy allows us to name nodes in $L$ and $R$ but these are treated by the system as \emph{node name variables}.
This means that node name variables identify the nodes on both sides of the rule but are instantiated when $g(L)$ is found using actual values from the matching nodes.

Our contribution to the rewrite rule language is to provide the functionality of generic application conditions. This task was two-fold: we created a grammar for $B$ and implemented a \Porgy Rule Editor plug-in (called Rule Conditions).

We also recall here that a \textit{strategic graph program} consists of a \emph{located graph} (a port graph with two distinguished subgraphs that specify the locations where rewriting should take place or not), a set of \emph{located rewrite rules}, and a \emph{strategy expression}. In a located graph $G_{P}^Q$, $P$ represents the \emph{position} subgraph of $G$ where rewriting steps may take place and $Q$ represents the \emph{banned} subgraph of $G$ where rewriting steps are forbidden. A located rewrite rule $L \Ra_C R_{M}^N$ can update $P$ and $Q$ in a rewrite step such that $P'=(P \setminus g(L)) \cup g(M)$ and $Q'= (Q \setminus g(L)) \cup g(N)$.

Our work to find the transitive closure is implemented in the form of a strategic graph program.

\subsection{Abstract Reduction Systems}
\label{subsec:ars}
We use the theoretical framework of Abstract Reduction Systems (ARS)~\cite[Chapter~2]{termrewrite} to prove termination. 
Various techniques to construct termination proofs have been published and we shall recall one here.
This technique requires the embedding of the ARS $(A, \lra)$ into another ARS $(B, >)$ of which we know that it terminates. 
Our choice, $(\mathbb{N}, >)$ terminates because every descending chain $a_0 > a_1 > \ldots$ is finite.

\begin{definition}[Monotone mapping]
The mapping $\varphi : A \lra B$ is monotone if $x \lra x' \Rightarrow \varphi(x) > \varphi(x')$. $\varphi$ is also known as the \emph{measure function}.
\end{definition}

It does not automatically follow from the above that to prove termination of an abstract reduction system 
it suffices to find a measure function $\varphi$ to embed the system into, for example, $(\mathbb{N}, >)$. 
But if we can prove that the ARS is finitely branching then we can make use of the following lemma:
\begin{lemma}
	\label{lemma:terminates}
	A finitely branching reduction terminates iff there is a monotone embedding into $(\mathbb{N}, >)$.
\end{lemma}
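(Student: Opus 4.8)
The plan is to prove the two directions of the biconditional separately; the substantive content lies in the forward direction, where the finite-branching hypothesis is genuinely used.

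For the easy direction ($\Leftarrow$), suppose there is a monotone map $\varphi : A \lra \mathbb{N}$, and suppose for contradiction that some reduction does not terminate, i.e.\ there is an infinite chain $a_0 \lra a_1 \lra a_2 \lra \cdots$. Applying monotonicity at each step yields $\varphi(a_0) > \varphi(a_1) > \varphi(a_2) > \cdots$, an infinite strictly descending chain in $(\mathbb{N}, >)$, contradicting the fact recalled above that every descending chain in $\mathbb{N}$ is finite. Note that finite branching is not needed for this implication.

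For the harder direction ($\Rightarrow$), assume $(A, \lra)$ is finitely branching and terminating. I would take as measure function the map $\varphi$ sending $x$ to the length of a longest reduction sequence starting at $x$ --- equivalently, the height of the tree $T_x$ whose nodes are the finite reduction sequences issuing from $x$, a node $x_0 \lra \cdots \lra x_k$ having as children the sequences $x_0 \lra \cdots \lra x_k \lra x_{k+1}$ with $x_k \lra x_{k+1}$. The crucial step is to verify that $\varphi(x)$ is a well-defined natural number: $T_x$ is finitely branching, since the set $\{ z : x_k \lra z \}$ of one-step successors of any element is finite by hypothesis, and $T_x$ has no infinite path, since an infinite path would be an infinite reduction from $x$, impossible by termination; hence by K\"onig's Lemma $T_x$ is finite, so its height is a well-defined element of $\mathbb{N}$. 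Monotonicity is then immediate: if $x \lra x'$, prepending this step to a reduction sequence from $x'$ of length $n$ gives one from $x$ of length $n+1$, so $\varphi(x) \geq \varphi(x') + 1 > \varphi(x')$.

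The main obstacle --- and essentially the only place requiring care --- is the well-definedness argument, i.e.\ the appeal to K\"onig's Lemma and the check that finite branching of the ARS really does make $T_x$ a finitely branching tree. This hypothesis is essential and not merely technical: dropping it, the supremum defining $\varphi(x)$ may be infinite even when every individual reduction sequence is finite (for instance if a single element has, for each $n$, a reduct admitting a reduction of length exactly $n$), and then no embedding into $(\mathbb{N}, >)$ exists. Everything else is routine.
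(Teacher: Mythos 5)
Your proof is correct and is the standard argument for this result: the paper itself does not prove the lemma but recalls it from the term rewriting literature (Baader--Nipkow, Chapter 2), where exactly this proof is given --- the easy direction from well-foundedness of $(\mathbb{N},>)$, and the forward direction by taking $\varphi(x)$ to be the maximal derivation length from $x$, well-defined via K\"onig's Lemma. Nothing further is needed.
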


\section{A Visual Language for Relational Schema Design}
\label{sec:portgraphdb}
We now show how relational schemata (using functional dependencies only) can be modelled as attributed port graphs. We use the $'.'$ (dot, member-of) operator to refer to a particular port of a node.

We define relation schema attributes and FDs as nodes. The fact that an attribute belongs to the right- or left-hand side of a FD is represented by edges. However, when adding FDs to the visual language, we face a challenge. Because of the semantics of a FD (i.e. LHS determines RHS), strategic graph programs executed on this visual language have to be able to distinguish between LHS and RHS attributes of FDs. A non-trivial FD must have at least one attribute on both sides where (RHS $\not\subseteq$ LHS). Also, as per the separation of concerns principle, a FD has to be aware of the list of attributes on its sides, not the other way around. Formally, we say that:

\begin{definition}[Functional Dependency Port Graph, FDPG]
\label{def:fdpg}
Let $R$ be a relation schema and $\Sigma$ its set of functional dependencies.
A {\em Funtional Dependency Port Graph} representing $\Sigma$ is an attributed port graph~\cite{FernandezKP18} $G_\Sigma=( V,P,E,D )_{\F}$ and is defined as:
\begin{itemize}
	\item $V = V_A \cup V_{FD}$ is a union of two disjoint sets of nodes: 
	\begin{itemize}
		\item $V_A$: set of Attribute nodes, one node for every attribute in $R$;
		\item $V_{FD}$: set of Functional Dependency nodes, one node for every functional dependency in $\Sigma$;
	\end{itemize}
	\item $P = P_A \cup P_{FD}$ is a union of two defined sets of ports: 
	\begin{itemize}
		\item $P_A = \{pFD\}$ and 
		\item $P_{FD} = \{pFDLHS, pFDRHS\}$;
	\end{itemize}
	\item $E$ is a finite set of edges between ports; two ports may be connected by only one edge;
	\item $D$ a set of records~\cite{FernandezKP18};
\end{itemize}
and a set ${\F}$ of functions $Connect$, $Attach$ and $\LL$ such that: 
\begin{itemize}
	\item Connect: for each edge $e \in E$, $Connect(e)$ is the pair $(p_1,p_2)$ of ports connected by $e$ where the only allowed pairs are (pFD, pFDLHS) and (pFDRHS, pFD). 
			For every dependency $\varphi \in \Sigma: X \to A$ the pFD port of every attribute node corresponding to $X$ will be connected to the pFDLHS port of the dependency node corresponding to $\varphi$ and the pFDRHS port of the FD node $\varphi$ will be connected to the pFD port of the attribute node representing $A$.
	\item Attach: 
	\begin{itemize}
		\item for each port $p \in P_A$, $Attach(p)$ is the node $n \in V_A$ to which the port belongs;
		\item for each port $p \in P_{FD}$, $Attach(p)$ is the node $n \in V_{FD}$ to which the port belongs;
	\end{itemize}
	\item $\LL$ a labelling function~\cite{FernandezKP18}.
\end{itemize}
\end{definition}

The following properties directly follow from Definition~\ref{def:fdpg}.
\begin{property}[Cardinality of set $V$]
  In a FDPG $G_\Sigma=( V,P,E,D )_{\F}$, $|V| = |V_A| + |V_{FD}| = |R| + |\Sigma|$.
  \begin{proof}
    The mappings $R \to V_A$ and $\Sigma \to V_{FD}$ are bijections.
  \end{proof}
\end{property}

\begin{property}[Cardinality of set $E$]
  Given the set of functional dependencies $\Sigma = \{\varphi_1,\ldots,\varphi_k\}$ and a FDPG $G_\Sigma=( V,P,E,D )_{\F}$,
  $|E| = |\Sigma| + \sum_{i=1}^{k} |LHS(\varphi_i)|$.
  \begin{proof}
    Number of edges = one right-hand side edge per dependency + sum of the sizes of the left-hand side of each functional dependency.
  \end{proof}
\end{property}

We implement FDPGs in PORGY. Firstly, using the set $D$ of records, we introduce an attribute called \emph{RelDbType} which denotes the role of the node in the relational context. Every new node and port created in a FDPG-based logical model have to have a constant \emph{RelDbType} value, placing it in the appropriate set of $V_A, V_{FD}, P_A$ or $P_{FD}$. Attribute nodes have \emph{RelDbType} = ATTR, FD nodes have \emph{RelDbType} = FD. The port of an attribute that handles the connection to either side of a FD has \emph{RelDbType} = pFD. The LHS and RHS connection ports of a FD node have \emph{RelDbType} = FDLHS and FDRHS, respectively. Both FDLHS and FDRHS ports have an integer attribute \emph{FunctionalArity} defined that allows the system to store the number of attributes on each side. This is required because the matching algorithm does not enforce exact arity since a particular port can be connected to other ports outside the match found, however, when matching on FDs and their LHSs, every single LHS attribute has to be in the matching subgraph.

\begin{figure}[h]
    \centering
    \begin{overpic}[scale=0.7]{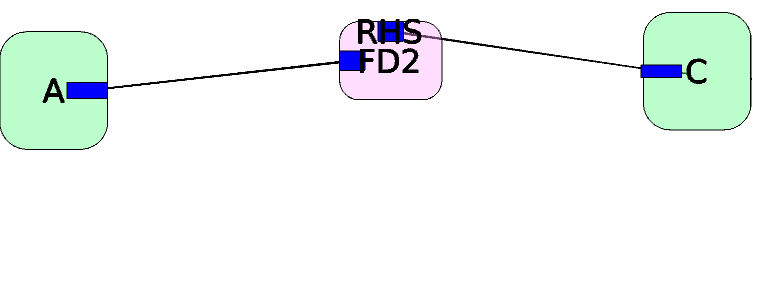}
        \texttt{
            \put(0,40) { {\parbox{2in} { \scriptsize{ 
                \underline{node:}\\
                viewLabel="A"\\
                RelDbType="ATTR"\\ 
                \underline{port:}\\
                RelDbType="pFD" }}}}
            \put(180,40) { {\parbox{2in} { \scriptsize{ 
                \underline{node:}\\
                viewLabel="FD2"\\
                RelDbType="FD"\\ 
                \underline{LHS port:}\\
                viewLabel="LHS"\\
                RelDbType="FDLHS"\\
                FunctionalArity=1\\
                \underline{RHS port:}\\
                viewLabel="RHS"\\
                RelDbType="FDRHS"\\
                FunctionalArity=1 }}}}
            \put(335,50) { {\parbox{2in} { \scriptsize{ 
                \underline{node:}\\
                viewLabel="C"\\
                RelDbType="ATTR"\\ 
                \underline{port:}\\
                RelDbType="pFD" }}}}
        }
    \end{overpic}
    \caption{The functional dependency $A \to C$.}
    \label{fig:fdpgexample}
\end{figure}

Both FD and ATTR nodes have an integer UID attribute that allows the rules to assign a numeric identity value to them. This is useful when a rule adds a new FD and we want control over the value of the unique identifier. We make use of the built-in \emph{viewLabel} attribute to assign meaningful node name constants (in models) and variables (in rules) to nodes.
Figure~\ref{fig:fdpgexample} shows a functional dependency $A \to C$ as a port graph, with the relevant attribute values.

Based on the FD-path definition of Ausiello et al.~\cite{AusielloDS83} we define an FDPG-path as follows:
\begin{definition}[FDPG-Path]
\label{def:fdpgpath}
Given an FDPG $G_\Sigma=(V,P,E,D)_{\F}$, an attribute node set $X \subseteq V_A$ and an attribute node $j \in V_A$, a (directed) FDPG-Path $\langle X, j \rangle$ from $X$ to $j$ is a minimal subgraph $G_\Sigma' = (V',P',E',D')_{\F'}$ of $G_\Sigma$ such that $X \cup \{j\} \subseteq V_A'$ and one of the following conditions holds:
\begin{enumerate}
	\item there exists $v \in V_{FD}'$ such that for all $x_i \in X, (x_i.pFD, v.pFDLHS) \in E'$ and $(v.pFDRHS, j.pFD) \in E'$, 
	i.e. there exists $v \in V_{FD}'$ such that $X$ is the set of all left-hand side attributes and $j$ is the right-hand side attribute of the functional dependency represented by $v$;
	\item there exist $v \in V_{FD}'$ and $k \in V_A'$ such that $(k.pFD, v.pFDLHS) \in E'$ and $(v.pFDRHS, j.pFD) \in E'$ and there is an FDPG-Path $\langle X, k \rangle$ included in $G_\Sigma'$;
	\item there exist $v \in V_{FD}'$ and $K \subseteq V_A'$ such that for all $k_i \in K, (k_i.pFD, v.pFDLHS) \in E'$ \\
	and $(v.pFDRHS, j.pFD) \in E'$ and $n$ FDPG-Paths $\langle X, k_1\rangle, \ldots \langle X, K_n \rangle$ are included in $G_\Sigma'$.
\end{enumerate}
\end{definition}

\begin{property}[Transitivity]
  \label{prop:trans}
  Condition 2 of Definition~\ref{def:fdpgpath} represents Armstrong's Transitivity axiom.
  \begin{proof}
    An FDPG-Path $\langle X, k \rangle$ represents FD $X \to k$. \\
	Edges $(k.pFD, v.pFDLHS) \in E$ and $(v.pFDRHS, j.pFD) \in E$ represent FD $k \to j$. 
	The fact that there is an FDPG-Path $\langle X,j \rangle$ means the FD $X \to j$ exists.
  \end{proof}
\end{property}

\begin{property}[Union]
  \label{prop:union}
  Condition 3 of Definition~\ref{def:fdpgpath} represents Armstrong's Union axiom.
  \begin{proof}
	The $n$ FDPG-Paths $\langle X, k_1\rangle, \ldots, \langle X, k_n \rangle$ represent FDs $X \to k_1, \ldots, X \to k_n$.\\
	Edges $k_i \in K, (k_i.pFD, v.pFDLHS) \in E$ and $(v.pFDRHS, j.pFD) \in E$ represent FDs $k_1 \to j, \ldots, k_n \to j$. 
	The fact that $n$ FDPG-Paths $\langle X, j \rangle$ exist means that $X \to k_1, \ldots, X \to k_n$ can be unified into $X \to K$.
  \end{proof}
\end{property}

We now turn our attention to extending the port graph rewriting language so that strategic graph programs can be created to find the transitive closure of a Functional Dependency Port Graph.

\section{Generic Rule Application Conditions}
\label{sec:rulecond}
We recall the structure of the arrow node \emph{Where} attribute defined in Section~\ref{subsec:pg}. In this section we extend the rewrite rule language to provide the functionality of generic application conditions. Firstly, we define the context-free grammar for $B$, secondly, we present the \Porgy Rule Editor plug-in (called Rule Conditions).

The EBNF grammar for $B$ is defined in Figure~\ref{fig:rulecondgrammar}. The structure of the grammar was inspired by C++ and follows the operator precedence of C++, too.
\grammarindent1.5in
\begin{figure}[!htb]
    \centering
    \begin{framed}
    \begin{flushleft}
        \begin{grammar}
            <node> ::= '$n($' "valid LHS node id" '$)$'
            
            <edge> ::= '$e($' "valid LHS edge id" '$)$'
            
            <element attribute> ::= ( <node> | <edge> ) '.' "quoted_attribute_name"
    
            <factor> ::= "number" | "quoted_string" | '(' <expression> ')' | '$!$' <factor> 
            \alt <element attribute> | 'max(' <expression> ',' <expression> ')'
            \alt 'min('~<expression>~','~<expression>')' | 'random('<factor>')'
    
            <term> ::= <factor> \{('$*$' | '$/$' | '$\%$') <factor>\} 
            
            <expression> ::= <term> \{('$+$' | '$-$' ) <term>\} 
            
            <comp operator> ::= '$==$' | '$!=$' | '$\textgreater$' | '$\textless$' | '$\textgreater=$' | '$\textless=$'

            <comparison> ::= <expression> <comp operator> <expression> | 'NotNode('"quoted_attribute_name" <comp operator> <expression>')'
            
            <logical expression> ::= <logical term> \{ '||' <logical term> \}
            
            <logical term> ::= <logical factor> \{ '\&\&' <logical factor> \}
            
            <logical factor> ::= <comparison> | '$!$' <logical factor> | '(' <logical expression> ')'
                
            <rule condition> ::= \{<logical expression>\};
        \end{grammar}
    \end{flushleft}
    \end{framed}
    \caption{The rule application condition grammar.}
    \label{fig:rulecondgrammar}
\end{figure}

We point out that when referring to a \Node or \Edge the user has to use its internally assigned \iid. Also, due to the implementation of \Porgy, there is no \Port construct in the grammar -- they have to be referred to as \Node. This is because the underlying graph engine (\Tulip) processes ports as nodes. Terminal {\tt number} can be any integer or floating-point number and {\tt quoted_string} is an arbitrary-length string made up of letters, digits and symbols in double quotes. Similarly, {\tt quoted_attribute_name} is a valid name of an attribute of \Node, \Edge or \Port.

We highlight the \NotNode operator: it iterates all nodes of $G$ and checks if there exists a node with an attribute {\tt quoted_attribute_name} and if the comparison on them evaluates to true. Intuitively, if at least one such node is found in $G$, \NotNode returns false. It is very important to note here that this check is performed on the entire graph $G$, not just in $g(L)$. This is fundamentally different from the rest of the rule application condition grammar, which only applies to $g(L)$. This is a consequence of the definition of the port graph rewrite rule which states that all variables in the Boolean expression of the \emph{Where} attribute have to occur in $L$, so that the matching algorithm can work with them. When a match $g(L)$ is found, all variables of $B$ are mapped so that their actual values can be found. However, when checking the absence of a node, we are not constrained by this, because we are not specifying a node in \NotNode -- we are only specifying an attribute comparison that \emph{must} evaluate to false on all nodes of $G$.

\Porgy offers a modular plug-in system allowing developers to create Python/C++ plug-ins. We implemented an LL-parser for the above detailed context-free grammar in C++ using the Boost Spirit Parser Framework. This framework generates and executes the parser design-time and builds and evaluates an abstract syntax tree run-time. When evaluated, the Boolean result is ANDed to the rest of the arrow node \emph{Where} attribute by the matching algorithm. The parser ensures that all nodes, edges, ports and attributes referred to in the conditions exist on the LHS of the rule.

We also added a UI extension to \Porgy that allows users to specify and parse/check the rule conditions. A screenshot of PORGY with the Conditions editor is presented in Figure~\ref{fig:PorgyScreenshot}. Examples of rule conditions can be found in Section~\ref{sec:tcstrat}.
\begin{figure}[!htb]
    \centering
    \includegraphics[scale=0.55]{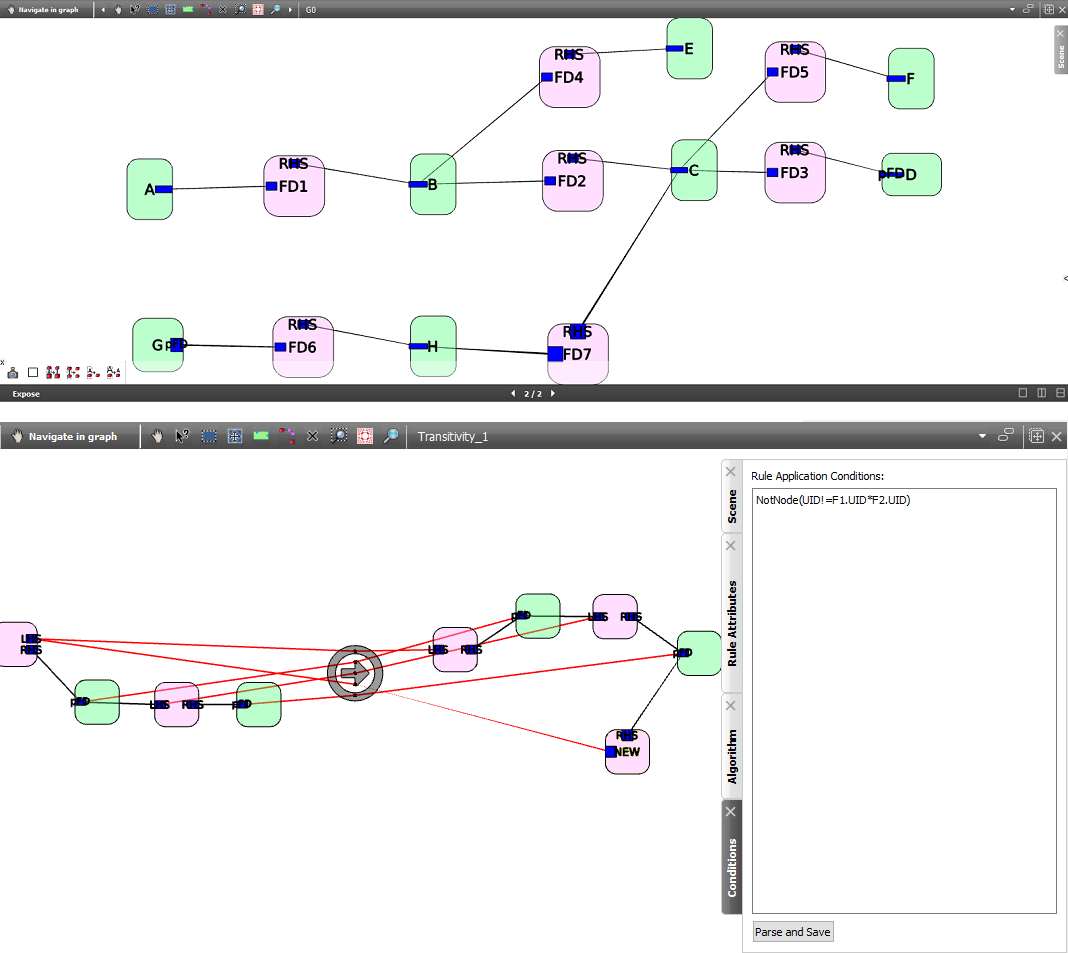}
    \caption{PORGY, an initial graph ($G0$), Transitivity\textsubscript{1} rule and the conditions editor.}
    \label{fig:PorgyScreenshot}
\end{figure}

\section{Transitive Closure Strategy}
\label{sec:tcstrat}
In this section we present the strategy to find the transitive closure of a set of FDs using the previously introduced visual language and rule conditions.

In PORGY, the starting point of port graph rewriting, the original model, is referred to as G0. In our case, G0 is a Functional Dependency Port Graph, as defined in Section~\ref{sec:portgraphdb}. The task is to find a relational transitive closure of G0, i.e. to generate all the transitive dependencies.

Inspired by the Chase Algorithm~\cite{AhoBU79,MaierMS79} we \emph{iterate} every FD in G0 (and also those added by the rules), apply the rules that detect a transitive dependency pattern and create the new FD. Once we have every possible new transitive dependency that goes through the iterated FD, we mark it \emph{visited}. We define two new Boolean node attributes: \iter to flag the node currently being iterated and \visit to permanently flag the node as visited. The two rules controlling the iteration are \emph{IterOn} and \emph{IterOff} (omitted). \emph{IterOn} randomly selects a node with attribute values {\tt RelDbType=FD}, {\tt iter=false}, {\tt visit=false} and sets the two flags: {\tt iter=true}, {\tt visit=true}. \emph{IterOff} rule selects the currently iterated FD node {\tt RelDbType=FD}, {\tt iter=true}, {\tt visit=true} and turns the iteration flag off: {\tt iter=false}, {\tt visit=true}.


\begin{figure}[!htb]
    \centering
    \begin{overpic}[scale=0.35]{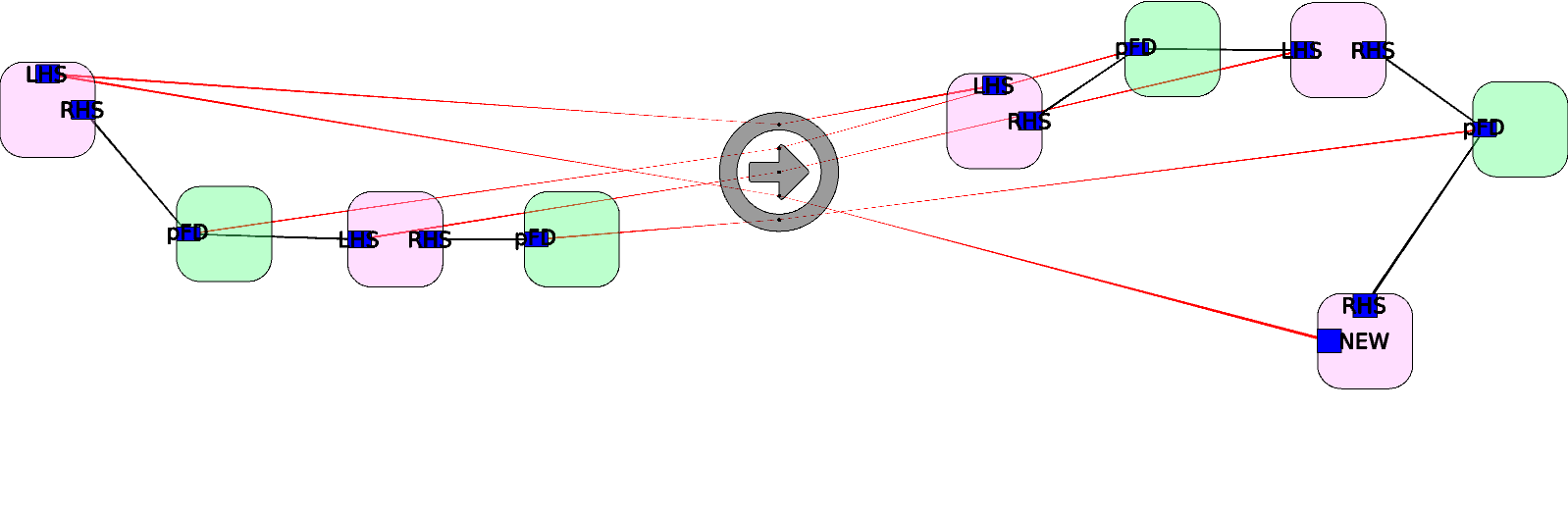}
        \texttt{
            \put(2,103){F2}
            \put(100,63){F1}
            \put(152,71){A}
            \put(60,71){B}
            \put(254,100){F2}
            \put(353,128){F1}
            \put(408,100){A}
            \put(315,120){B}            
            \put(350,20){{\parbox{2in} {\tiny{iter=F\\visit=F\\UID=F1.UID*F2.UID}}}}
            \put(0,90){{\parbox{2in} {\tiny{FDRHS.FunctionalArity=1}}}}
            \put(95,40){{\parbox{2in} {\tiny{FDLHS.FunctionalArity=1\\FDRHS.FunctionalArity=1\\iter=T\\visit=T}}}}
            \put(190,50){{\parbox{2in} {\tiny{\underline{Rule Condition:}\\NotNode(UID==F1.UID*F2.UID)}}}}
        }
    \end{overpic}
    \caption{Transitivity\textsubscript{1} rule.}
    \label{fig:RuleTransitivity1}
\end{figure}

\begin{figure}[!htb]
    \centering
    \begin{overpic}[scale=0.35]{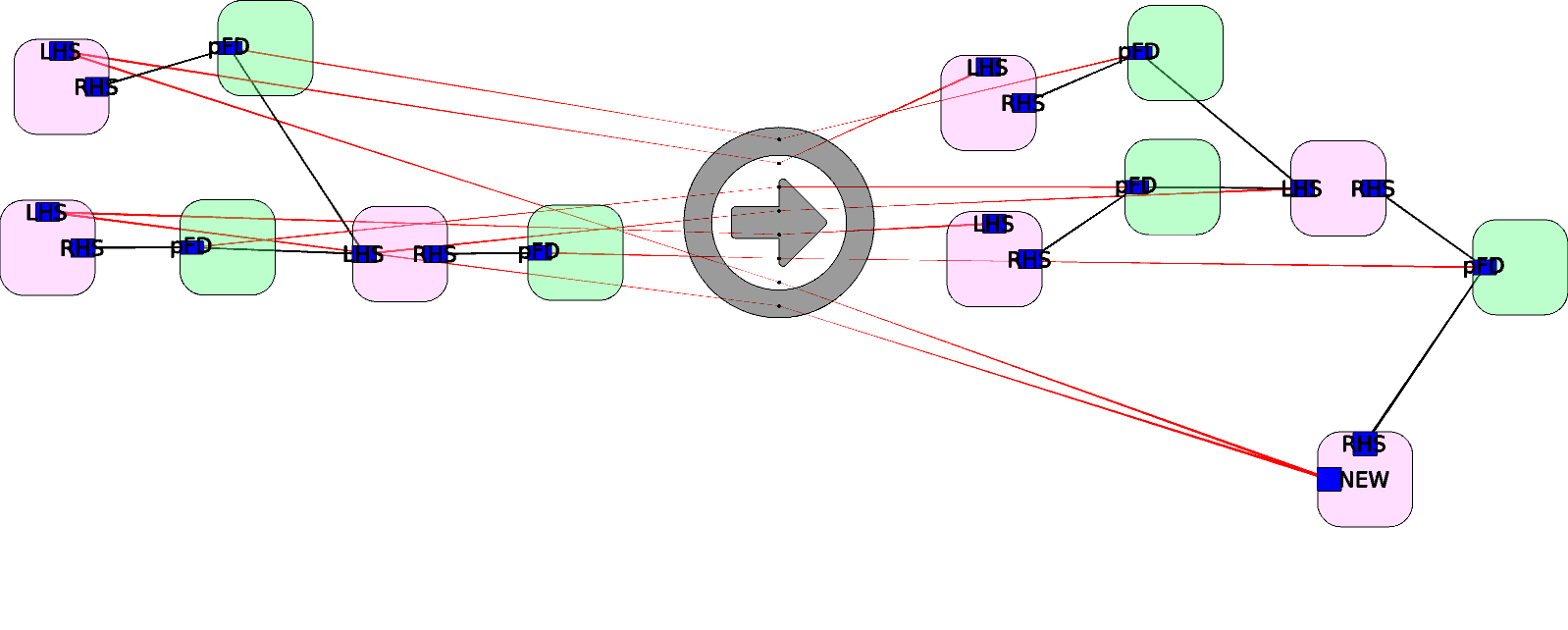}
        \texttt{
            \put(6,140){F2}
            \put(2,95){F3}            
            \put(100,90){F1}
            \put(70,155){C}
            \put(60,100){B}
            \put(155,100){A}
            \put(255,92){F3}
            \put(253,135){F2}
            \put(353,122){F1}
            \put(408,92){A}
            \put(315,115){B}
            \put(315,150){C}
            \put(345,15){{\parbox{2in} {\tiny{iter=F\\visit=F\\UID=F1.UID*F2.UID*F3.UID}}}}
            \put(0,80){{\parbox{2in} {\tiny{FDRHS.FunctionalArity=1}}}}
            \put(0,124){{\parbox{2in} {\tiny{FDRHS.FunctionalArity=1}}}}
            \put(95,65){{\parbox{2in} {\tiny{FDLHS.FunctionalArity=2\\FDRHS.FunctionalArity=1\\iter=T\\visit=T}}}}
            \put(180,50){{\parbox{2in} {\tiny{\underline{Rule Condition:}\\NotNode(UID==F1.UID*F2.UID*F3.UID)}}}}
        }
    \end{overpic}
    \caption{Transitivity\textsubscript{2} rule.}
    \label{fig:RuleTransitivity2}
\end{figure}

The first two rules we created can be seen on Figures~\ref{fig:RuleTransitivity1}~and~\ref{fig:RuleTransitivity2}. 
These rules detect FDPG-Paths representing transitive functional dependency chains e.g. $X \to Y \to A$ which, per Armstrong's Transitivity Axiom, means that $X \to A$ holds.
Red edges in the rules go through the \emph{bridge} arrow node ports. These edges mean that all edges from outside the matching subgraph into the port on the LHS end of the red edge are to be copied to connect to the port on the RHS end of the red edge. For example, the red edges running between the pFDLHS ports of nodes F2 in both rules 
and of node F3 in Transitivity\textsubscript{2} rule ensure that 
\begin{itemize}
	\item the pFDLHS port of the newly created dependency will be connected to all attributes that were on the LHS of F2 (and F3);
	\item the pFDLHS port of the new instance of F2 (and F3) will be connected to all attributes that were on the LHS of F2 (and F3) (the original dependency is preserved).
\end{itemize}

To offer extra, context-specific backtracking functionality, we assign prime number values to the UID attribute of every FD. When a new FD is created by any transitivity rule, the UID of the new FD is the product of the UIDs of the FDs that lead to the new FD. This is calculated and assigned by the Rule Algorithm feature of \Porgy. We use the Rule Condition functionality specified in Section~\ref{sec:rulecond} to control the applicability of the Transitivity rules. For example, the condition {\tt NotNode(UID==F1.UID*F2.UID)} means that if there is a node in the entire graph being rewritten (not only the matching subgraph) with a value in UID equal to the product of the UIDs of F1 and F2 (meaning the transitive dependency has already been found) then the rule shouldn't apply.

Finally, the strategy that uses the above defined rules is Strategy~\ref{alg:trancl}. The \emph{ResetVisitedFlags} rule sets the flags {\tt visit=F}, to allow for a whole new loop to run on all FD nodes.

\begin{algorithm}
    1. \hspace*{8pt} $\while{\tmatch{IterOn}} \doo{$\\
    a) \hspace*{16pt} $\one{IterOn}$;\\
    b) \hspace*{16pt} $\repeatt{\one{Transitivity_1}}$;\\
    c) \hspace*{16pt} $\one{IterOff}$\\
    \hspace*{12pt} };\\
    2. \hspace*{8pt} $\repeatt{\one{ResetVisitedFlags}}$;\\
    3. \hspace*{8pt} $\while{\tmatch{IterOn}} \doo{$\\
    a) \hspace*{16pt} $\one{IterOn}$;\\
    b) \hspace*{16pt} $\repeatt{\one{Transitivity_2}}$;\\
    c) \hspace*{16pt} $\one{IterOff}$\\
    \hspace*{12pt} }
    \caption{Transitive Closure Strategy}\label{alg:trancl}
\end{algorithm}

Note that the node being iterated (node name variable: F1) plays a pivotal role in the matching and its LHS subgraph increases in size as the \emph{FunctionalArity} of its FDLHS port increases.

The Transitive Closure Strategy is explained as follows:
\begin{enumerate}
	\item For as long as there is at least one FD node the strategy hasn't \emph{visited} and \emph{iterated}, do
	\begin{enumerate}
		\item Pick one such FD at random (with equal probabilites) and mark it \emph{visited} and \emph{iterated},
		\item Find and apply all possible applications of the rule \emph{Transitivity\textsubscript{1}} with the FD node picked in the previous step as the pivotal node F1,
		\item Mark F1 \emph{visited} but \emph{not iterated};
	\end{enumerate}
	\item Mark all FD nodes \emph{not visited};
	\item For as long as there is at least one FD node the strategy hasn't \emph{visited} and \emph{iterated}, do
	\begin{enumerate}
		\item Pick one such FD at random (with equal probabilites) and mark it \emph{visited} and \emph{iterated},
		\item Find and apply all possible applications of the rule \emph{Transitivity\textsubscript{2}} with the FD node picked in the previous step as the pivotal node F1,
		\item Mark F1 \emph{visited} but \emph{not iterated};
	\end{enumerate}
\end{enumerate}

\begin{property}[Derivation Tree of Strategy~\ref{alg:trancl}]
  \label{prop:derivtreebranch}
  Every non-leaf node in the Derivation Tree of Strategy~\ref{alg:trancl} has only one child node. Conversely, there is only one leaf node.
  \begin{proof}
    From the semantics of strategic port graph programs~\cite{FernandezKP18} we know that strategy constructs we use (\tmatch{}, \while{}, \repeatt{\one{}} and \one{})
	will not branch the Derivation Tree at all. The loop constructs we use execute their arguments sequentially as many times as they apply.
  \end{proof}	
\end{property}

Rules \emph{IterOn}, \emph{IterOff} and \emph{ResetVisitedFlags} are omitted due to space constraints. We discuss the semantic and rewriting characteristics of Strategy~\ref{alg:trancl} in Theorems~\ref{thm:tcsound}, \ref{thm:tccomplete} and \ref{thm:terminating}.

\begin{thm}
    \label{thm:tcsound}
	The Transitive Closure Strategy is sound. That is, it \textbf{only} finds functional dependencies that can be inferred from the original set of FDs 
	using Armstrong's Axioms but ignoring the meaningless dependencies that would be generated by the Reflexivity axiom.
\end{thm}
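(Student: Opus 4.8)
The plan is to prove, by induction on the number of rewriting steps performed by Strategy~\ref{alg:trancl}, the invariant: \emph{every FD node of the current graph represents a functional dependency that belongs to $\Sigma^+$}, where an FD node $v$ \emph{represents} the FD $LHS(v) \to RHS(v)$ with $LHS(v)$ the set of attribute nodes joined to $v.pFDLHS$ and $RHS(v)$ the attribute node joined to $v.pFDRHS$. By Property~\ref{prop:derivtreebranch} the derivation is a single linear path, so ``the graph after $n$ steps'' is well defined; at any rate the argument applies to every graph reachable from $G_0$. Granting the invariant, soundness is immediate, since every dependency the strategy finds is an FD node of some reachable graph.

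First I would set aside the rules that do no logical work. \emph{IterOn}, \emph{IterOff} and \emph{ResetVisitedFlags} only read and write the Boolean attributes \iter{} and \visit{} on FD nodes; they add and remove no nodes and alter no edge incident to a port of type pFD, so the represented FDs are unchanged and the invariant is trivially preserved. Likewise the \texttt{NotNode} side conditions of the transitivity rules only restrict \emph{when} those rules fire, never \emph{which} FD is created, so they do not bear on soundness.

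The heart of the proof is the ``local soundness'' of the two transitivity rules: when \emph{Transitivity\textsubscript{1}} (resp.\ \emph{Transitivity\textsubscript{2}}) rewrites $G$ to $G'$ it adds exactly one new FD node, leaves every existing FD node --- in particular the matched ones, ``the original dependency is preserved'' --- intact, and the new node represents an FD derivable by Armstrong's Axioms from the FDs represented by the matched nodes. For \emph{Transitivity\textsubscript{1}} this is essentially Property~\ref{prop:trans}: the \saturated{} conjuncts of the \emph{Where} attribute together with \texttt{FunctionalArity} $=1$ on node F1's pFDLHS and pFDRHS ports force F1 to be some FD $B \to A$, while node F2 carries exactly $B$ on its single-attribute right-hand side and has its pFDLHS port matched through a red bridge edge, so F2's entire left-hand side $Z$ is copied onto the pFDLHS port of the new node, whose pFDRHS port is wired to $A$; hence the new node represents $Z \to A$, which follows from $Z \to B$ and $B \to A$ by Transitivity. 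For \emph{Transitivity\textsubscript{2}} the match witnesses FDs $W \to C$, $V \to B$ and $\{B,C\} \to A$ (the \texttt{FunctionalArity} and \saturated{} data once more pinning these down as the complete sides), and the bridge edges assemble the pFDLHS port of the new node as the \emph{union} of F2's and F3's left-hand sides, its pFDRHS port being wired to $A$; so the new node represents $W \cup V \to A$. This FD is in $\Sigma^+$: Augmentation gives $W \cup V \to C$ and $W \cup V \to B$, Union gives $W \cup V \to \{B,C\}$, and Transitivity with $\{B,C\} \to A$ yields $W \cup V \to A$ --- a combination generalising Properties~\ref{prop:trans} and~\ref{prop:union}.

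Assembling the induction is then routine: in $G_0$ every FD node lies in $\Sigma \subseteq \Sigma^+$; a bookkeeping step preserves all represented FDs; a transitivity step adds an FD that by local soundness is derivable from FDs which, by the induction hypothesis, already lie in $\Sigma^+$, hence lies in $\Sigma^+$ by the cut property of derivability. So the invariant persists and soundness follows. It remains to note that no dependency found is a mere Reflexivity consequence: each FD the rules create has a single-attribute right-hand side $A$, and a created FD $X \to A$ with $A \in X$ would, on tracing $A$ back through the bridge edges to the matched premises, exhibit a dependency cycle through $A$ among FDs all lying in $\Sigma^+$ --- impossible under the standing hypothesis that the schema has no cyclical dependencies. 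The step I expect to be the main obstacle is exactly the bookkeeping of the previous paragraph: one has to read off, from the semantics of the red bridge edges together with the \saturated{}/\texttt{FunctionalArity} side conditions, \emph{precisely} which attribute nodes the newly created FD node is wired to on each side, and confirm that this is the FD claimed; the surrounding induction is short.
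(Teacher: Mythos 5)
Your proposal is correct and follows essentially the same route as the paper's proof: set aside \emph{IterOn}/\emph{IterOff}/\emph{ResetVisitedFlags} as pure attribute bookkeeping, observe that only the Transitivity rules alter the dependency structure, and show that each such rule only creates an FD node derivable by Armstrong's Axioms from the matched premises. You make the induction on rewrite steps explicit and spell out the Armstrong derivation for \emph{Transitivity\textsubscript{2}} (Augmentation, Union, then Transitivity) in more detail than the paper, which instead appeals to the FDPG-Path definition and Properties~\ref{prop:trans} and~\ref{prop:union}; this is a presentational refinement rather than a different argument.
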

\begin{proof}
	Let $\Sigma$ be a set of functional dependencies and $G_\Sigma=(V,P,E,D)_{\F}$ the FDPG representing $\Sigma$. Let $K,L \subset V_A$ be attribute node sets representing 
	attribute sets $X, Y$ (resp.) and $n \in V_A$ an attribute node representing attribute $A$.
	We want to show that, after executing the Transitive Closure Strategy on $G_\Sigma$ obtaining $G_\Sigma'$, an FDPG-Path $\langle K,n \rangle$ exists in $G_\Sigma'$
	\textbf{only if} the functional dependency $X \to A$ can be inferred from $\Sigma$ using Armstrong's axioms.\\	
	It follows from Definition~\ref{def:fdpgpath} and Properties~\ref{prop:trans}~and~\ref{prop:union} that the existence of FDPG-Path $\langle X,Y,A \rangle$ 
	in a leaf node of the Derivation Tree means that either $\Sigma$ contained the explicit dependency $X \to A$ or that the Transitive Closure Strategy added it.
	Our rules and strategies (in fact, \Porgy itself) follow the rewriting principle that only rules can change the structure of the graph, strategy expressions can not. 
	We ignore rules \emph{IterOn} and \emph{IterOff} because they only affect values of node attributes \emph{iter} and \emph{visit} and not alter FDPG-Paths in the graph.
	This means we have to show that \emph{Transitivity\textsubscript{1..k}} rules create only valid FDPG-Paths.
	These rules detect FDPG-Paths $\langle K,L,n \rangle$, add a new Functional Dependency node to represent the transitive dependency $X \to A$, 
	and (using plain and red edges) ensure that	all LHS and RHS attributes are connected to the ports of the new dependency node. 
	This creates the direct path $\langle K,n \rangle$.
	According to Definitions~\ref{def:fdpg} and \ref{def:fdpgpath} this new path is valid.
\end{proof}

\begin{thm}
    \label{thm:tccomplete}
	The Transitive Closure Strategy is complete. That is, it finds \textbf{all} functional dependencies that can be inferred from the original set of FDs using Armstrong's Axioms but ignoring the meaningless dependencies that would be generated by the Reflexivity axiom.
\end{thm}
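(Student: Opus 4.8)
The plan is to reduce completeness to the statement that every functional dependency derivable from $\Sigma$ without using Reflexivity is eventually produced by the strategy as an explicit FD node, and to prove that statement by induction; the delicate point is the order in which rules fire. By soundness and completeness of Armstrong's axioms, and because Augmentation and Pseudotransitivity are consequences of Reflexivity together with Transitivity and Union while Decomposition is vacuous for FDs in canonical form, the dependencies we must account for are exactly those in the closure $\Sigma^{T}$ of $\Sigma$ under Transitivity and Union. Properties~\ref{prop:trans} and~\ref{prop:union} together with Definition~\ref{def:fdpgpath} yield the equivalence $X\to A\in\Sigma^{T}$ iff $G_\Sigma$ admits an FDPG-Path $\langle K,n\rangle$ with $K,n$ representing $X,A$, both directions being routine inductions matching the clauses of Definition~\ref{def:fdpgpath} against Transitivity/Union steps. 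Reading ``the strategy found $X\to A$'' as ``the final graph $G_\Sigma'$ contains an FD node with left-hand side $K$ and right-hand side $n$'', i.e.\ a Condition~1 FDPG-Path, it then suffices to show that Strategy~\ref{alg:trancl} --- understood, for a schema of maximal left-hand-side arity $k$, as one \emph{Transitivity\textsubscript{$i$}} phase for each $i\le k$, each preceded by \emph{ResetVisitedFlags} --- turns every FDPG-Path of $G_\Sigma$ into such an explicit FD node. Theorem~\ref{thm:terminating} and Property~\ref{prop:derivtreebranch} make ``$G_\Sigma'$'' well-defined; \emph{IterOn}, \emph{IterOff} and \emph{ResetVisitedFlags} touch only the attributes \texttt{iter} and \texttt{visit} and leave every FDPG-Path intact, so they may be ignored exactly as in the proof of Theorem~\ref{thm:tcsound}; and acyclicity forbids any \emph{Transitivity\textsubscript{$i$}} from creating an $A\to A$, so everything produced is non-trivial, matching the ``ignoring Reflexivity'' clause.

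The core is an induction on the recursive structure of the FDPG-Path (Definition~\ref{def:fdpgpath}), and, across cases, on the arity of the FD node at its top. Condition~1 is immediate. The clean instance of Condition~2 is a chain of single-determinant FDs $p\to v_{1}\to\cdots\to v_{r}\to j$: by the induction hypothesis applied to its suffix, $v_{1}\to j$ is created at some point of the \emph{Transitivity\textsubscript{1}} phase; since it has a single-attribute left-hand side it becomes the pivotal node of \emph{Transitivity\textsubscript{1}} exactly once afterwards (being created with \texttt{visit=false}, it is reached by the \texttt{while}-loop), and the edge $p\to v_{1}$, lying in $\Sigma$, is present throughout, so the \texttt{repeat}-block for that pivot matches $F_{2}=(p\to v_{1})$ and creates $p\to j$ (the \texttt{NotNode} guard only blocks a re-creation, never a first one). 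The trick is that the \emph{front} edge of the chain is always an original dependency, so one peels from the front and recurses on the shorter suffix. Conditions~2 and~3 in general reduce analogously, through the \emph{Transitivity\textsubscript{$s$}} phase, to two facts: the FD node $w$ at the top of the path is original (every FD node occurring in an FDPG-Path of $G_\Sigma$ is), hence present from the start of its phase; and the inner FDPG-Paths $\langle K,k_{i}\rangle$ have, by induction, already been made explicit, so the matches $F_{2},\dots,F_{s+1}$ that \emph{Transitivity\textsubscript{$s$}} needs all exist.

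The step I expect to be the main obstacle is the scheduling hidden in that last reduction. Each FD node is made pivotal only once, after which it is permanently \texttt{visit=true}, while \emph{IterOn} picks the next pivot at random; so, a priori, $w$ could be iterated before some dependency $K\to k_{i}$ it requires has been produced --- in particular when $K\to k_{i}$ is obtainable only through an FD whose arity differs from $w$'s and is therefore produced in a different phase --- and that combination would be lost unless the phases are iterated until no rule applies. Completeness therefore hinges on showing that the process (iterated to a fixpoint if necessary) leaves a graph closed under every Transitivity pattern --- equivalently, that no required combination is permanently missed. I would attack this by a well-founded induction on the derivation of $X\to A$, using that newly created FDs are themselves re-iterated (\texttt{visit=false}) and that acyclicity bounds the ranks at which attributes enter the relevant single-source closures; the innermost step of each sub-derivation bottoms out in an original FD that is available from the start of its phase. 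Making this precise --- in particular pinning down the interleaving of the \emph{Transitivity\textsubscript{$s$}} phases with \emph{ResetVisitedFlags} so that a saturated graph is reached regardless of the random choices of \emph{IterOn} --- is the technical heart of the proof; by contrast the rule-level correctness is already supplied by Properties~\ref{prop:trans} and~\ref{prop:union}.
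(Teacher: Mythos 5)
Your route is genuinely different from the paper's, and considerably more demanding of itself. The paper's proof is a two-step, high-level argument: (i) Reflexivity, Transitivity and Union together form a complete inference system, and (ii) each \emph{Transitivity}$_k$ rule structurally performs a Transitivity step and, when the premises F2, F3, \dots are instantiated with identical left-hand sides, a $k$-ary Union step. It contains no induction on derivations or on FDPG-Paths, and it says nothing about whether the strategy actually schedules every required rule instance --- local simulability of the axioms by the rules is taken to imply global saturation of the graph. Your proposal instead reduces completeness to the claim that every FDPG-Path is eventually realised as an explicit FD node and argues by induction on the path structure; your phase-1 argument (peel the original front edge of the chain, and use that a freshly created FD node has \texttt{visit=false} and is therefore guaranteed to be pivoted later) is correct and is precisely the kind of reasoning the paper omits.

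The obstacle you flag is real, and it is a gap in the paper's proof as much as in yours. Once a node has been pivoted, \emph{IterOff} leaves it permanently \texttt{visit=true}; \emph{ResetVisitedFlags} is run exactly once, between the \emph{Transitivity}$_1$ and \emph{Transitivity}$_2$ phases of Strategy~\ref{alg:trancl}; so within a phase, a pivot F1 that is iterated before some needed premise F2 or F3 has been created will never see that combination, and nothing in the strategy iterates the phases to a fixpoint. Your phase-1 analysis explains why this is harmless there (the non-pivot premise is always an original dependency), but for \emph{Transitivity}$_2$ the premises F2 and F3 may themselves be derived FDs, possibly produced only in the other phase, and the published proof never confronts this. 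In short: a different decomposition, whose value is that it isolates the one step that actually needs proving; what remains open in your write-up is exactly what is also unproved in the paper.
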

\begin{proof}
	We know that Armstrong's Reflexivity, Transitivity and Union rules, together, form a complete system of inference rules. 
	We have to show that these three rules can be deduced from the Transitive Closure Strategy, or more precisely, from rules \emph{Transitivity\textsubscript{1..k}}. 
	As noted earlier, a repeating pattern forms on the left-hand side of F1 in our Transitivity rules as $k$ increases.
	A $k$-ary Armstrong's Union operation is performed by \emph{Transitivity\textsubscript{k}}. For example, \emph{Transitivity\textsubscript{2}}: 
	if the left-hand sides of F2 and F3 are the same attribute set, and the rule simply copies those edges (which it does), 
	then the system behaves as if it performed the Union operation.	
	It is obvious from the structure of the FDPG-Rules that they perform what Armstrong's Transitivity rule does.
\end{proof}

\begin{thm}
    \label{thm:terminating}
    Let $\Sigma$ be a set of functional dependencies and $G_\Sigma=(V,P,E,D)_{\F}$ the FDPG representing $\Sigma$.
	The Transitive Closure Strategy (TCS) is a terminating program that never fails.
	That is, assuming that there are no cyclical dependencies in $\Sigma$, there is no infinite descending chain in the Derivation Tree.
\end{thm}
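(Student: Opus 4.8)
The plan is to reduce the statement to one finiteness fact --- that the rewriting performed by TCS involves only finitely many Functional Dependency nodes --- and then to feed it into Lemma~\ref{lemma:terminates}. By Property~\ref{prop:derivtreebranch} the Derivation Tree of Strategy~\ref{alg:trancl} is a single path, so the abstract reduction system whose states are the reachable configurations and whose relation is ``one rewriting step'' is finitely branching; hence Lemma~\ref{lemma:terminates} reduces termination to exhibiting a monotone map $\varphi$ from configurations into $(\mathbb{N},>)$. Moreover, every iteration of a \while{} loop of Strategy~\ref{alg:trancl} applies at least the rule \emph{IterOn}, so an infinite path would entail infinitely many rewriting steps; it therefore suffices to bound the total number of those steps. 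The ``never fails'' clause I would settle separately by inspecting the strategy operators: \while{} and \repeatt{} do not propagate a failure of their body, \tmatch{IterOn} guards every \one{IterOn}, the node that \one{IterOn} flags \emph{iterated} makes the following \one{IterOff} applicable, and each application of \emph{Transitivity\textsubscript{i}} sits inside a \repeatt{}; so no subexpression of Strategy~\ref{alg:trancl} can fail.

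The crucial lemma is: \emph{along any execution of TCS only finitely many FD nodes are ever created}. I would derive it from three facts. First, none of the rules \emph{IterOn}, \emph{IterOff}, \emph{ResetVisitedFlags}, \emph{Transitivity\textsubscript{1}}, \emph{Transitivity\textsubscript{2}} deletes an FD node --- the Transitivity rules even keep the dependencies they consume, via the red \emph{bridge} edges. Second, every FD node created by a Transitivity rule is given a UID equal to the product of the UIDs of the FD nodes it was built from, and its rule condition \NotNodeOp{UID == \ldots} forbids creating an FD node whose UID already occurs anywhere in $G$; hence distinct FD nodes carry distinct UIDs. Third --- and here the no-cyclical-dependency hypothesis is essential --- the UID values that can arise form a finite set: since $\Sigma$ has no cyclical dependencies, every derivation of a transitive FD by the Transitivity/Union patterns of Definition~\ref{def:fdpgpath} (cf.\ Properties~\ref{prop:trans} and~\ref{prop:union}) uses each original FD at most once, so every UID is a square-free product of the $|\Sigma|$ prime UIDs assigned initially, of which there are at most $2^{|\Sigma|}$. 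Combining the three, the number $F$ of FD nodes that can ever be present is finite (at most $2^{|\Sigma|}$) and it strictly increases at each \emph{Transitivity\textsubscript{i}} step.

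With $F$ in hand, everything terminates. Each inner \repeatt{}-loop on a Transitivity rule performs at most $F$ steps (each one creates a fresh FD node); and inside a \while{} loop the attribute \emph{visit} is never reset to false, so \emph{IterOn} --- which turns a \emph{visit}${=}\mathrm{false}$ FD node into a \emph{visit}${=}\mathrm{true}$ one --- fires at most $F$ times, \emph{IterOff} fires once per \emph{IterOn}, and \emph{ResetVisitedFlags} (step~2) fires at most once per FD node; hence each \while{} loop makes at most $F$ iterations and TCS performs $O(F)$ rewriting steps overall. Since the Derivation Tree is then a single finite path, I may take for the required $\varphi$ either an explicit lexicographic measure --- a phase counter (step~1 $>$ step~2 $>$ step~3 $>$ terminated), then the number of FD nodes not yet created, then the number of not-yet-visited FD nodes relevant to the current phase, then a bounded position-in-the-loop-body counter, collapsed into $\mathbb{N}$ using $F$ --- or, more simply, the number of rewriting steps still to come along that path; monotonicity is then immediate. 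Lemma~\ref{lemma:terminates} gives termination, and the ``never fails'' analysis shows the path ends in a success, so there is no infinite descending chain in the Derivation Tree.

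I expect the main obstacle to be the third fact of the crucial lemma: making precise what ``$\Sigma$ has no cyclical dependencies'' means, and proving that it really does keep UIDs bounded --- equivalently, that no FD can occur twice in the tree of its own derivation --- so that $F$ is finite. The remaining ingredients (the reduction via Property~\ref{prop:derivtreebranch} and Lemma~\ref{lemma:terminates}, the ``never fails'' case analysis, and the construction and verification of $\varphi$) are routine once $F$ is known to be finite.
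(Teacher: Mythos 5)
Your overall architecture coincides with the paper's: both proofs split the claim into a ``never fails'' case analysis of the strategy operators, a ``finitely branching'' observation drawn from Property~\ref{prop:derivtreebranch}, and a monotone embedding into $(\mathbb{N},>)$ fed to Lemma~\ref{lemma:terminates}. Where the paper simply posits two loop measures (for the \repeatt{} loops, the number of remaining matches of the LHS of F1; for the \while{} loops, $|V_{FD}|_{G0}+|\Sigma^+|-|V_{FD}|_{G_i}$) and asserts that they stay positive, you try to actually prove the finiteness fact that makes such measures well defined --- that only finitely many FD nodes can ever be created. That is the right thing to want, and you correctly identify it as the crux.

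However, your justification of that crux is flawed. You claim that, absent cyclical dependencies, every UID is a \emph{square-free} product of the initial primes, so at most $2^{|\Sigma|}$ UIDs (hence FD nodes) can arise. This is false even without cycles, because the sub-derivations feeding a Union/Transitivity step may share an original FD. Take $\Sigma=\{A\to B,\ B\to C,\ B\to D,\ CD\to E\}$ with primes $p_1,\dots,p_4$: the derived nodes for $A\to C$ and $A\to D$ get UIDs $p_1p_2$ and $p_1p_3$, and the node produced by \emph{Transitivity\textsubscript{2}} for $A\to E$ gets UID $p_1^2p_2p_3p_4$, which is not square-free. Once exponents can exceed $1$, the bound $2^{|\Sigma|}$ evaporates, and with it your bound $F$; distinctness of UIDs alone does not give finiteness, and the \NotNodeOp{\mathrm{UID}==\dots} condition does not prevent two nodes with different UIDs from representing the same functional dependency. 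So the step you flagged as ``the main obstacle'' is indeed where your argument breaks: you would need a different bounding argument (e.g., relating FD nodes injectively to elements of the finite set $\Sigma^+$, which is essentially what the paper's $|\Sigma^+|$ measure presupposes, or bounding the multiset of possible UID exponent vectors using acyclicity of the dependency graph). The remaining parts of your proposal --- the never-fails analysis, the use of Property~\ref{prop:derivtreebranch}, and the lexicographic collapse into $\mathbb{N}$ --- match the paper and are fine once $F$ is genuinely established.
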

\begin{proof}
    \emph{Never fails.} We know from the semantics set out in~\cite{FernandezKP18} that the expressions $\whiledoo{C}{S}$ and $\repeatt{S}$ never fail. 
    $\one{IterOn}$ will also never fail because $\while{\tmatch{IterOn}}$ will check if it is possible to execute it at all.
    $\one{IterOff}$ will always apply because every execution of it is preceded by an execution of \emph{IterOn} and \emph{Transitivity\textsubscript{k}} rules do not change
    the \emph{iter} and \emph{visited} values of any pre-existing nodes.    
	
	\emph{Finitely branching.} It follows from Property~\ref{prop:derivtreebranch} that the Derivation Tree of the strategic graph program $(G_\Sigma$, TCS$)$ is finitely branching.
	
	\emph{Embedding.} We want to find an embedding of $(G_\Sigma$, TCS$)$ into $(\mathbb{N}, >)$. 
	We do this by defining a measure for every loop in the strategy.
	For $\repeatt{\one{Transitivity_k}}$, the measure is the number of possible matches of the LHS subgraph of F1. This is a good measure because 
	a) the Rule Condition on UID prevents re-application of the rule and b) even though the NEW node is added, it is not part of the LHS subgraph of F1.
	For $\whiledoo{\tmatch{IterOn}}{...}$ the measure is $|V_{FD}|_{G0} + |\Sigma^+| - |V_{FD}|_{Gi}$. 
	That is, the initial number of FD nodes in $G_\Sigma$ plus the size of the transitive closure of $\Sigma$ less the number of FD nodes after the $i$th application of the loop.
	With one successful application of \emph{Transitivity\textsubscript{k}} the number of FD nodes increases therefore the measure decreases.	
	Note that neither of these measures can be 0 or less. From these it follows that the above detailed measure provides a good monotone mapping into $(\mathbb{N}, >)$. 
	
	Then from Lemma~\ref{lemma:terminates} it follows that the strategy terminates.	
\end{proof}

\section{Conclusion}
\label{sec:conclusion}
Our results can be used to build a strategic graph program that takes the transitive closure as input and finds a minimal cover~\cite{AusielloDS83,Maier80}. From a minimal cover, all candidate keys of a relation schema can be found~\cite{SaiedianS96}. The minimal cover and the set of candidate keys can then be used as inputs to Bernstein's Synthesis Algorithm to synthesize Third Normal Form~\cite{Bernstein76} schemata.

\emph{Future Work}
Definition \ref{def:fdpg} may be generalized such that the Attribute nodes have a list of dependency ports, each element of the list corresponding to a type of data dependency that can exist in a database (e.g. multi-valued dependency). Converesely, each dependency type requires the introduction of its own Dependency node type. Each new Dependency node type may represent bilateral dependencies (LHS and RHS) or multilateral ones. Each new dependency type will require its own Path definition.

\bibliographystyle{eptcs}
\bibliography{Bibliography}

\end{document}